\documentclass[conference]{IEEEtran}

\usepackage{setspace}


\usepackage{graphics,
           psfrag,
           epsfig,
           amsthm,
           cite,
           amssymb,
           url,
           dsfont,
           subfigure,
           algorithm,
           algorithmic,
           balance,
           enumerate,
           color,
           setspace
}
\usepackage{amsmath}

\usepackage{epstopdf}

\newtheorem{definition}{Definition}

\newtheorem{theorem}{Theorem}

\newcommand{\eref}[1]{(\ref{#1})}
\newcommand{\sref}[1]{Section~\ref{#1}}

\newcommand{\fref}[1]{Figure~\ref{#1}}

\newcommand{\cref}[1]{Constraint~\ref{#1}}



\hyphenation{op-tical net-works semi-conduc-tor}

\newcommand{\ignore}[1]{}


\epsfxsize=3.0in
\pagestyle{plain}
\IEEEoverridecommandlockouts
%

\begin{document}

\title{A Lossy Graph Model for Decoding Delay Reduction in Instantly Decodable Network Coding}

\author{
   \authorblockN{Ahmed Douik$^{\dagger}$, Sameh Sorour$^\ast$, Mohamed-Slim Alouini$^\dagger$, and Tareq Y. Al-Naffouri$^{\dagger\ast}$\\}%
   \authorblockA{$^\dagger$King Abdullah University of Science and Technology (KAUST), Kingdom of Saudi Arabia \\
    $^\ast$King Fahd University of Petroleum and Minerals (KFUPM), Kingdom of Saudi Arabia \\
    Email: $^\dagger$\{ahmed.douik,slim.alouini,tareq.alnaffouri\}@kaust.edu.sa \\
    $^\ast$\{samehsorour,naffouri\}@kfupm.edu.sa}
    }

\maketitle

\IEEEoverridecommandlockouts

\begin{abstract}

In this paper, we study the broadcast decoding delay performance of generalized instantly decodable network coding (G-IDNC) in the lossy feedback scenario. The problem is formulated as a maximum weight clique problem over the G-IDNC graph in \cite{refsameh}. In order to further minimize the decoding delay, we introduce in this paper the lossy G-IDNC graph (LG-IDNC). Whereas the G-IDNC graph represents only doubtless combinable packets, the LG-IDNC graph represents also uncertain packet combinations when the expected decoding delay of the encoded packet is lower than the individual expected decoding delay of each packet encoded in it. Since the maximum weight clique problem is known to be NP-hard, we use the heuristic introduced in \cite{refjournal} to discover the maximum weight clique in the LG-IDNC graph and finally we compare the decoding delay performance of LG-IDNC and G-IDNC graphs through extensive simulations. Numerical results show that our new LG-IDNC graph formulation outperforms the G-IDNC graph formulation in all situations and achieves significant improvement in the decoding delay especially when the feedback erasure probability is higher than the packet erasure probability. 

\end{abstract}

\begin{keywords}
Minimum decoding delay, lossy feedback, G-IDNC graph, maximum weight clique problem.
\end{keywords}

\section{Introduction} \label{sec:intro}

\emph{Network Coding (NC)} is based on a simple idea. Instead of the simple replication of packets they receive, intermediate nodes in a network may transmit functions of these packets. Since its introduction in \cite{850663}, a decade ago, NC gained much attention thanks to its numerous benefits. It was shown that NC is able to reduce delay over broadcast and multicast erasure channel and thus it can improve the throughput of the channel in the wireless networks. These abilities are of great interest for real time applications requiring fast transmission and recovery of the data such as cellular, streaming television, WiFi, and WiMAX. Two approaches of network coding can be found in the literature named respectively Random Network Coding (RNC) and Opportunistic Network Coding (ONC). Whereas RNC combines packets using random, independent, and non zero coefficients, OPC exploits the diversity of lost and received packets knowledge at the base station (BS) to generate the packet combinations. Despite the attractive merits of RNC such as optimality in reducing the number of sent packets and ability to recover without feedback, it is not suitable for real time applications because decoding can be done only when the whole frame is received.

In this paper, we are interested in a set of applications requiring quick and reliable transmission over lossy channels with strict delay tolerance. A suitable approach of such applications is the OPC subclass called Instantly Decodable Network Coding (IDNC). In this approach, the BS is able to transmit only binary XOR combination of packets. IDNC, despite the limitation in packet generation, is attractive because it allows fast encoding at the BS and fast decoding at the users, eliminating as such the need of expensive computation and reducing also the complexity of both the BS and users. A lot of researches has been conducted to minimize the delay in IDNC. In \cite{ref4}, the authors minimized the completion time, which is the overall transmission time, for IDNC by modeling the problem as a Short Stochastic Path (SSP). The authors in \cite{ref5,ref6,ref1} considered the delay in IDNC as the decoding delay which is the individual delay experienced when delivered a packet that is either non decodable at its arrival or it does not bring new information. The problem was formulated in \cite{ref5} as a linear programming problem called Strict IDNC (S-IDNC), limiting the BS to generate packets that can be decoded by all users then extended to Generalized IDNC (G-IDNC) in \cite{ref2} by loosening this constraint and asking the users instead to discard all non decodable packets. The decoding delay performance of G-IDNC for independent erasure channels was studied in \cite{ref2} and extended to persistent erasure channels in \cite{ref3}.

In the aforementioned works, the authors considered a prompt and perfect reception of the feedback. This assumption is not realistic due to the feedback channel impairments. Recently some research has been dedicated to study the decoding delay with limited feedback \cite{refsameh,refjournal,refahmed,ref17}. More specifically in \cite{refsameh} we studied the decoding delay performance of G-IDNC in lossy and intermittent feedback for memoryless channels. We extended our study to lossy intermittent feedback in \cite{refahmed} and persistent erasure channels in \cite{refjournal}. In these paper, the problem was formulated as a maximum weight clique problem over the G-IDNC graph. Since the problem is known to be NP-hard, many heuristics to perform effective packet selection were proposed \cite{refsameh,refjournal,ref3}.

To represent all the feasible packet combinations, the G-IDNC graph was introduced in \cite{ref2} in a context of perfect feedback. It was shown in \cite{arg1} that all packet combinations is equivalent to a maximal weight clique in the G-IDNC graph and thus the optimal combination that guarantees the minimum decoding delay for the current transmission is the maximum weight clique in the G-IDNC graph. This graph formulation was then used in the lossy feedback context to represent all doubtless packet combination. In others words, the packet combinations that are always instantly decodable for all targeted users. This instantly decodable condition limits the coding opportunities   in each transmission as it discards uncertain packet combination. It clearly limits the minimization of the decoding delay.

In this paper we address the following question : Is there a graph representation that achieves a better decoding delay for G-IDNC in a lossy feedback scenario? To answer this question, we first introduce the Lossy Generalized Instantly Decodable Network Coding graph (LG-IDNC). This new graph allows uncertain packet combinations when the expected decoding delay increase when sending the combined packet is less than the one experienced when sending individually the packets encoded in that combination. Consequently, the LG-IDNC graph is expected to achieve a lower decoding delay than the G-IDNC graph.

The rest of this paper is organized as follows: \sref{sec:model} presents the system, channel, and feedback models. The G-IDNC problem is briefly formulated in \sref{sec:problem} and the G-IDNC graph limitations are exposed in \sref{sec:lim}. We introduce our newly proposed LG-IDNC graph in \sref{sec:lgidnc} before presenting and discussing simulation results in \sref{sec:results}. Finally, \sref{sec:conclusion} concludes the paper.

\section{System and Feedback Model} \label{sec:model}

\subsection{System Model and Parameters}

Consider a wireless base station that is required to broadcast a frame (denoted by $\mathcal{N}$) of $N$ source packets to a set (denoted by $\mathcal{M}$) of $M$ users. Each user is interested in receiving all the packets of $\mathcal{N}$ with the minimum delay in any order. In the \emph{initial phase}, the BS broadcasts the $N$ packets of the frame uncoded to all the users. Each user who received a packet acknowledges its reception by sending a feedback. A lost packet or feedback will make the state of the user/packet uncertain at the BS.

After the \emph{initial phase}, three sets of packets are attributed to each user $i$:

\begin{itemize}
\item The Has set (denoted by $\mathcal{H}_i$) is defined as the set of packets successfully received and acknowledged by user $i$.
\item The Wants set (denoted by $\mathcal{W}_i$) is defined as the set of packets that are not in the Has set ( $\mathcal{W}_i = \mathcal{N} \setminus \mathcal{H}_i$ ). In other words, $\mathcal{W}_i$ consists of the lost or failed to feedback packets by user $i$.
\item The Uncertain set (denoted by $\mathcal{U}_i$) is defined as the set of packets which state is uncertain. We have $\mathcal{U}_i \subseteq \mathcal{W}_i$.
\end{itemize}

The BS stocks these information in a \emph{BS feedback matrix} (SFM) $\mathbf{F} = [f_{ij}],~ \forall~ i \in \mathcal{M},~ \forall~j \in\mathcal{N}$ as follows:
\begin{align}
f_{ij} =
\begin{cases}
0 \hspace{0.9 cm}& j \in \mathcal{H}_i \\
1 \hspace{0.9 cm}& j \in \mathcal{W}_i \setminus \mathcal{U}_i \\
x \hspace{0.9 cm}& j \in \mathcal{U}_i.
\end{cases}
\end{align}

After this \emph{initial phase}, the recovery phase takes place. In this phase, the BS uses the SFM to select the network XOR-coded combinations of the source packets to transmit. Each user that received and successfully decoded a packet sends back an acknowledgement after each transmission. The BS uses this feedback to update the feedback matrix. This process is repeated until all users acknowledge the successful reception of all the packets.

The encoded packets, in each transmission, can have one of the following options for each user $i$:
\begin{itemize}
\item \emph{Non-innovative:} A packet is non-innovative for user $i$ if the packets encoded in it do not bring new information. In other words, they were all previously received and successfully decoded.
\item \emph{Instantly Decodable:} A packet is instantly decodable for user $i$ if it contains \emph{only one source packet} from $\mathcal{W}_i$.
\item \emph{Non-Instantly Decodable:} A packet is non instantly decodable for user $i$ if it contains two or more source packet from $\mathcal{W}_i$.
\end{itemize}

We define the decoding delay as in \cite{ref2}:
\begin{definition}
At any recovery phase transmission, a user $i$, with non-empty Wants sets, experiences a one unit increase of decoding delay if it receives a packet that is either non-instantly decodable or both instantly decodable and non-innovative.
\end{definition}

We define the targeted users by a transmission as the set users to whom the BS indented the packet combination when encoding it.

\subsection{Channel and Feedback Models}

The channel is modeled as a memoryless erasure channel. Each packet is subject to loss at user $i$ with a packet erasure probability $p_i,~\forall~ i \in \mathcal{M}$. We also assume that the channels of various users are independent.

The feedback channels follow a similar model: they are independent from one user to the other and the $i$th channel experiences erasure with probability $q_i ,~ \forall~ i \in \mathcal{M}$. The feedback consists of an acknowledgement of all previously received/lost packets. Only targeted users feedback. Thus the BS will not receive any feedback from a user unless it was targeted by the transmission.

The packet and feedback erasure probabilities are constant during a frame delivery period and are known by the BS. The special case when the packet and feedback erasure probabilities are the same for all users is called reciprocal channel. 

\section{Generalized Instantly Decodable Network Coding Problem (G-IDNC)} \label{sec:problem}

\subsection{Problem Description and Formulation}

For an arbitrary packet combination $\kappa$, let $d_i(\kappa)$ denote the decoding delay increase of user $i$ and let $\mathcal{D}\left(\kappa\right)$ be the total increase of the decoding delay of all users after the transmission of the encoded packet $\kappa$. In other words, we have:
\begin{align}
\mathcal{D}\left(\kappa\right) = \sum_{i \in M_w} d_i(\kappa),
\end{align}
where $M_w$ is the set of users having non empty Wants set.

Minimizing the decoding delay is finding the packet combination among all possible packet combinations that guarantees the minimum expected decoding delay for the current transmission. The problem was formulated in \cite{ref5} as follows:
\begin{align}
\underset{\kappa}{\text{min}}\left\{ \mathds{E} \left[ \mathcal{D}\left(\kappa\right) \right]\right\} = \underset{\kappa}{\text{min}}&\left\{ \mathds{E} \left[\sum_{i \in M_w} d_i(\kappa)\right]\right\} \nonumber \\
\text{subject to }\kappa &= \bigoplus_{j \in A} j,~\forall~A \subseteq \mathcal{N},
\end{align}
where $\bigoplus\limits_{j \in A}$ represents the XOR combination of packets in the set $A$.

In order to minimize the decoding delay in G-IDNC in the perfect feedback case, the authors in \cite{ref2} proposed to look for all doubtless possible packet XOR-combinations then choose the combination that guarantees the minimum delay for this transmission.

\subsection{G-IDNC Graph}

To exhibit all these possible combinations of packets, the \emph{G-IDNC graph} was introduced in \cite{ref2}. G-IDNC considers that two packets can be combined if the packet that is in the Wants set of the first user is in the Has of the second user and inversely. This approach was extended in \cite{refsameh,refjournal,refahmed} to the lossy feedback. However in this configuration G-IDNC graph presents only all the doubtless possible packet XOR-combinations.

This $\mathcal{G} (\mathcal{V},\mathcal{E})$ graph is constructed by first generating a vertex $v_{ij} \in \mathcal{V}$ for each nonzero entry in the SFM (i.e. for every packet $j \in \mathcal{W}_i,~ \forall~ i \in \mathcal{M}$). These vertices are then connected with an edge if the packet combination represented by the vertices can be decoded by the two users represented by these vertices. In other words, $v_{ij}$ and $v_{kl}$ in $\mathcal{V}$ are connected with an edge if and only if one of the following conditions is true:
\begin{itemize}
\item C1: $j=l \Rightarrow$ The users $i$ and $k$ are interested in receiving the same packet $j$.
\item C2: $j \in \mathcal{H}_k$ and $l \in \mathcal{H}_i \Rightarrow$ The requested packet of the first vertex is in the Has set of the second vertex and reciprocally.
\end{itemize}

The connectivity conditions C1 expresses the interest of two users in the same packet and the condition C2 involves combination of packet $j$ and $l$ that will be certainty instantly decodable and maybe non-innovative for users $i$ and $k$. 

The BS encodes the packet to be sent by taking the binary XOR of all the packets represented by the vertices of a selected maximal clique in $\mathcal{G} (\mathcal{V},\mathcal{E})$. The users to whom this encoded packet is intended are those identified by the vertices of the selected maximal clique.

In \cite{arg1}, it was shown that the problem of minimizing the decoding delay in G-IDNC is equivalent to finding the maximum weight clique in the G-IDNC graph. The maximum weight clique problem in lossy feedback was formulated in \cite{refjournal} as follows:
\begin{align}
\kappa^* &=  \underset{\kappa \in \mathcal{G}}{\text{argmax}}  \sum_{i \in \tau(\kappa)} (1-p_i) \times p_{i,n}(j_i(\kappa)),
\end{align}
where $\tau(\kappa)$ is the set of users targeted by $\kappa$, $j_i(\kappa)$ is the targeted packet in the transmission $\kappa$ to user $i$ and $p_{i,n}(j)$ the probability that packet $j$ is innovative for user $i$ given by the following theorem:

\begin{theorem}
The probability for packet $j$ to be innovative to user $i$ is
\begin{align}
p_{i,n}(j) &= \left(\cfrac{p_i}{p_i+(1-p_i)q_i}\right)^{\lambda_{ij}},~\forall~j \in \mathcal{W}_i,
\end{align}
where $\lambda_{ij}$ is the number of time packet $j$ was attempted to receiver $i$ since the last heard feedback from that receiver.
\end{theorem}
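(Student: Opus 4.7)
The plan is to model the innovation status of packet $j$ at user $i$ as a Bayesian event conditioned on the BS's knowledge state, namely that no acknowledgement from user $i$ regarding packet $j$ has been heard during the last $\lambda_{ij}$ transmissions. I would first isolate two elementary random events occurring in each attempt: the packet reception event (success with probability $1-p_i$, failure with probability $p_i$) and the feedback event (successful delivery with probability $1-q_i$, erasure with probability $q_i$), which are independent across users, across attempts, and between the forward and feedback links by the memoryless independent erasure channel assumption stated in \sref{sec:model}. Under the recovery phase feedback rule, an ACK is generated only when the user has actually decoded the packet, so in a single attempt the BS fails to hear an ACK whenever (i) the packet is lost at user $i$ or (ii) the packet is received but its ACK is erased on the feedback link.

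Next I would compute the per-attempt probabilities of the two events that matter. The event ``no ACK heard'' in a single transmission has probability $p_i + (1-p_i)q_i$ by summing the disjoint cases (i) and (ii) above. The event ``packet not received'' has probability $p_i$, and since a non-received packet produces no ACK, the joint event ``packet not received and no ACK heard'' coincides with ``packet not received'' and hence also has probability $p_i$. Applying Bayes' rule for a single attempt then gives
\begin{align}
\Pr(\text{not received} \mid \text{no ACK}) = \frac{p_i}{p_i + (1-p_i)q_i}.
\end{align}

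The last step is to lift this per-attempt conditional probability to the $\lambda_{ij}$-attempt regime. Because the forward and feedback erasures are independent across transmissions, the joint events across the $\lambda_{ij}$ attempts factorize: the probability that the packet is never received in any of them is $p_i^{\lambda_{ij}}$, while the probability that no ACK is heard in any of them is $\bigl(p_i + (1-p_i)q_i\bigr)^{\lambda_{ij}}$. Packet $j$ remains innovative to user $i$ precisely when it has never been successfully received during these $\lambda_{ij}$ attempts, so dividing these two probabilities yields the announced expression.

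The only subtle point is the correct identification of the conditioning event, which must reflect exactly what the BS observes, namely a silence on the feedback channel from user $i$ with respect to packet $j$ over the $\lambda_{ij}$ relevant transmissions. Once this is granted, the independence across attempts makes the exponent $\lambda_{ij}$ immediate and no further calculation is required; the main obstacle is therefore conceptual rather than computational, ensuring that the ACK mechanism (only decoders transmit feedback) is faithfully represented when forming the numerator and denominator of the Bayes ratio.
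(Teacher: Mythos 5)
Your proposal is correct: the paper itself gives no proof (it defers to \cite{refsameh}), and your per-attempt Bayes ratio $p_i/(p_i+(1-p_i)q_i)$ raised to the power $\lambda_{ij}$ by independence across attempts is exactly the derivation used there. The only nit is that in this paper's model targeted users acknowledge upon successful \emph{reception} of the transmission rather than upon decoding, but since reception is the gating event in either reading, your numerator and denominator are unchanged.
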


\begin{proof}
The proof of this theorem can be found in \cite{refsameh}.
\end{proof}

\section{G-IDNC Graph Limitations}\label{sec:lim}

From the connectivity conditions of G-IDNC graph, we clearly see that this graph does not represent all the possible packet combinations but only those that are certainly instantly decodable by all the targeted users. Therefore, in the lossy feedback scenario this graph represents a sub-optimal solution. Consider the following BS feedback matrix and the associated G-IDNC graph illustrated in \fref{fig:example}. We assume in this example that all users are experiencing the same packet erasure probability $p$.
\begin{figure}[t]
\centering
  \includegraphics[width=1\linewidth]{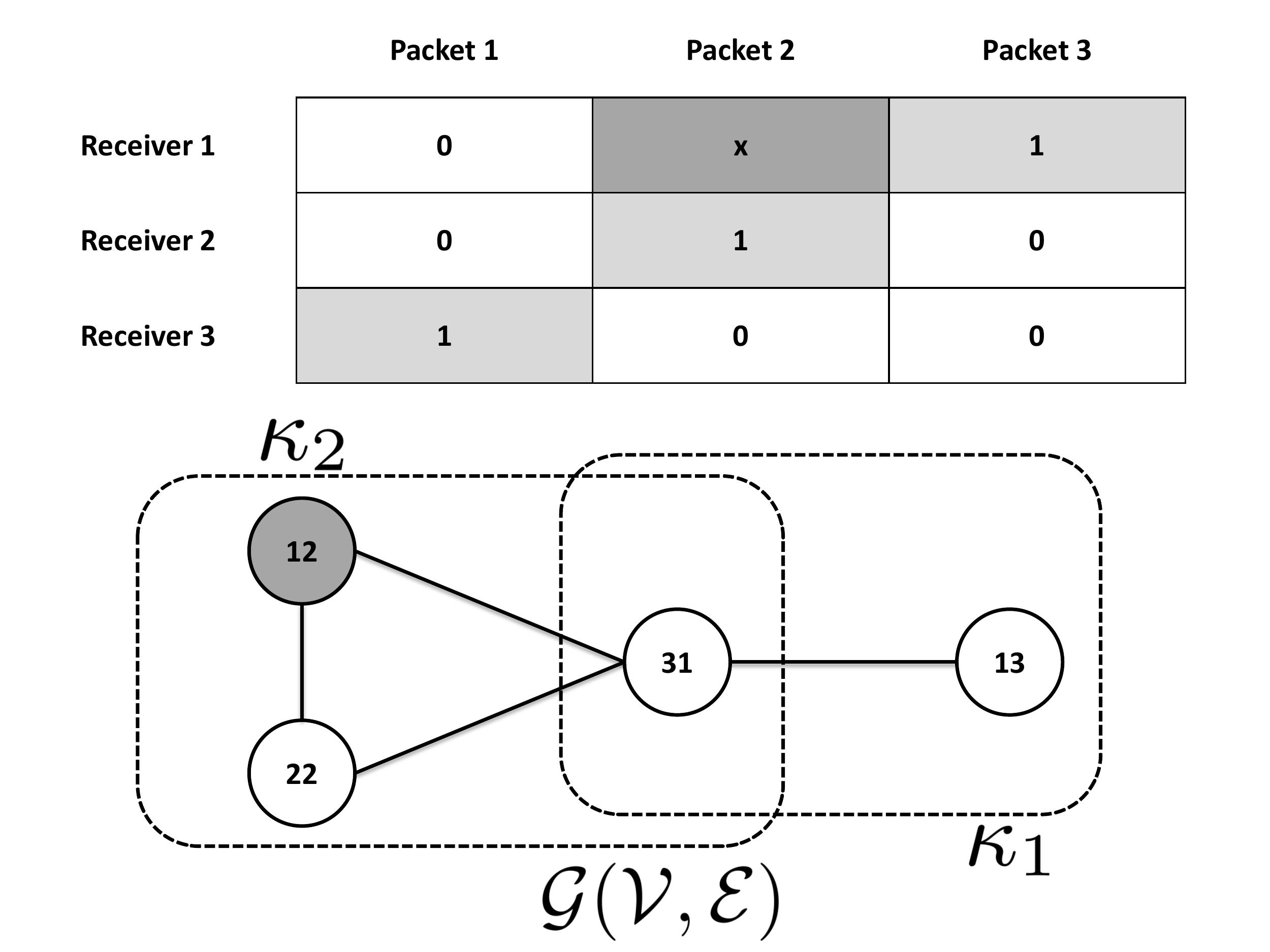}\\
  \caption{Feedback matrix and its corresponding G-IDNC graph.}\label{fig:example}
\end{figure}

The possible packet combinations in the G-IDNC graph, represented by the maximal weight cliques, are $1\oplus2$ and $1\oplus3$. The expected decoding delay increase after sending $1\oplus2$ is $(1-p)p_{n,1}(2)$ and the one after sending $1\oplus3$ is $(1-p)$. On the other hand, the coded packet $1\oplus2\oplus3$, which violates the G-IDNC graph conditions, achieves an expected decoding delay increase of $(1-p)(1-p_{n,1}(2))$. Consequently, if $(1-p_{n,1}(2)) < p_{n,1}(2)$, the packet $1\oplus2\oplus3$ is expected to achieve a lower decoding delay. However this packet belongs to another class of graph with different connectivity conditions. After this motivating example, we now describe our newly proposed lossy G-IDNC graph that can be used to further minimize the sum decoding delay.

\section{Lossy Generalized Instantly Decodable Network Coding Graph (LG-IDNC)} \label{sec:lgidnc}

In this section, we introduce our proposed graph formulation. In order to express the connectivity conditions of the LG-IDNC graph, we need first to compute the expected decoding delay increment of an encoded packet. We finally formulate the graph connectivity conditions based on the expected decoding delay increase. 

\subsection{Decoding Delay Increment of a Packet Combination}

Let $v_{ij}$ and $v_{kl}$ be two vertices in G-IDNC graph with $i \neq k$ and $j \neq l$. Note that packets in the Has set can be considered as uncertain packets with a probability to be innovative equal to $0$ and certain packet in the Wants set (i.e. $\mathcal{W}_i \setminus \mathcal{U}_i$) as uncertain packets with innovative probability equal to $1$. Therefore, for any user $i$ and for any arbitrary packet $j$, the probability that this packet bring new information is $\widehat{p}_{i,n}(j)$ with
\begin{align}
\widehat{p}_{i,n}(j) = 
\begin{cases}
p_{i,n}(j) \hspace{0.9 cm}& j \in \mathcal{U}_i\\
0 \hspace{0.9 cm}& j \in \mathcal{H}_i\\
1 \hspace{0.9 cm}& j \in \mathcal{W}_i \setminus \mathcal{U}_i.
\end{cases}
\end{align}
The following theorem introduces the finish probability:

\begin{theorem}
The probability that user $i$ successfully received all his primary packets but $\mathcal{W}_i \neq \varnothing$ at time $t$ is:
\begin{align}
p_{i,f}  &= \prod_{j \in \mathcal{W}_i} \left( 1- p_{i,n}(j) \right) , ~ \forall~ i \in \mathcal{M} \text{ such that } \mathcal{W}_i = \mathcal{U}_i.
\end{align}
\end{theorem}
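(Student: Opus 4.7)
The plan is to decompose the event whose probability is $p_{i,f}$ into per-packet events, evaluate each one using Theorem~1, and then conclude by independence.

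First, I would unpack the event under the hypothesis $\mathcal{W}_i = \mathcal{U}_i$. In that case every packet that the BS still believes user $i$ needs is flagged as uncertain, meaning the BS has sent that packet at least once but has not received an acknowledgement for it. The event whose probability is denoted $p_{i,f}$ is that, in spite of this, user $i$ has in fact received every such packet, i.e., for each $j \in \mathcal{W}_i$ the packet $j$ has been successfully delivered to user $i$ and every feedback on $j$ has been erased on the return channel. This is precisely the complement, at the level of individual packets, of the event ``packet $j$ is still innovative for user~$i$''.

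Second, I would invoke Theorem~1. That theorem gives the probability $p_{i,n}(j)$ that packet $j$ is still innovative for user~$i$, given the history of $\lambda_{ij}$ unacknowledged transmissions; this is already the posterior probability conditioned on the ``no feedback heard'' observation, so no additional Bayes step is needed. Consequently, $1 - p_{i,n}(j)$ is exactly the probability that packet $j$ has been delivered to user $i$ while all $\lambda_{ij}$ acknowledgements for it were lost, which is the per-packet event isolated in the previous paragraph.

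Finally, I would combine the per-packet events into a joint probability. The forward and feedback erasure processes are assumed memoryless and independent across transmission instants in Section~II, and distinct packets $j \in \mathcal{W}_i$ correspond to disjoint sets of forward and feedback channel uses, so the events ``packet $j$ has been delivered while all its feedbacks were erased'' are mutually independent across $j$. Taking the product of $1 - p_{i,n}(j)$ over $j \in \mathcal{W}_i$ then yields the claimed expression for $p_{i,f}$. The only delicate point is the independence justification; everything else is just rewriting the event and quoting Theorem~1, so I expect this step to be the main obstacle to articulate cleanly, especially in making explicit that the conditioning on the global ``no feedback received'' observation factorizes over packets because the channel assumptions decouple the per-packet reception and acknowledgement histories.
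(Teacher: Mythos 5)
Your argument is sound, but note that there is nothing in the paper to compare it against: the paper's ``proof'' of this theorem is a one-line deferral to \cite{refsameh}, so you have supplied a self-contained argument where the authors supply only a citation. Your reconstruction is the natural one: under the hypothesis $\mathcal{W}_i = \mathcal{U}_i$, the event that user $i$ has in fact received everything is the intersection over $j \in \mathcal{W}_i$ of the complements of the events ``packet $j$ is still innovative for user $i$'', and each complement has posterior probability $1-p_{i,n}(j)$ by Theorem~1. The step you rightly flag as delicate, independence across packets, does go through, and the clean way to see it is that in this framework each transmission targets user $i$ with at most one packet of $\mathcal{W}_i$, so the $\lambda_{ij}$ attempts associated with distinct $j$ occupy disjoint transmission slots; the conditioning event ``no feedback heard from user $i$ since the last acknowledgement'' is itself the intersection over those slots of the per-slot event $\{\text{packet erased}\} \cup \{\text{packet received and feedback erased}\}$, hence the posterior law factorizes over packets. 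One small imprecision: $1-p_{i,n}(j)$ is the posterior probability that packet $j$ was delivered \emph{in at least one} of the $\lambda_{ij}$ attempts, given the no-feedback observation; your phrasing ``delivered while all $\lambda_{ij}$ acknowledgements for it were lost'' folds the conditioning into the event. This is cosmetic and does not affect the correctness of the product formula.
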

\begin{proof}
The proof of this theorem can be found in \cite{refsameh}.
\end{proof}

Note that if user $i$ received all the needed packets, we can consider that $p_{i,f}=1$ and if he has packets in $\mathcal{W}_i \setminus \mathcal{U}_i$, we can consider that $p_{i,f}=0$ for that user. Therefore, for any arbitrary user $i$, the probability that this receiver still need packets can be written as:
\begin{align}
\overline{p}_{i,f}  &= 1 - \widehat{p}_{i,f} = 1 - \prod_{j \in \mathcal{N}} \left( 1- \widehat{p}_{i,n}(j) \right).
\end{align}

Let $\mathcal{D}(j)$ denote the overall expected decoding delay increase for users $i$ and $k$ after sending packet $j$. In other words:
\begin{align}
\mathcal{D}(j) = \mathds{P}(d_i(j)=1) + \mathds{P}(d_k(j)=1).
\end{align}
When only packet $j$ is sent, user $i$ will experience a delay if the two following conditions are true:

\begin{enumerate}
\item He receivers the packet $j$.
\item The packets $j$ is not innovative for that user.
\item He still needs packets.
\end{enumerate}
Thus the decoding delay increase for this user is 

\begin{align}
\mathds{P}(d_i(j)=1) = (1-p_i)(1-\widehat{p}_{n,i}(j))\overline{p}_{f,i} .
\end{align}
By symmetry user $k$ will also experience a similar delay. The sum decoding delay $\mathcal{D}(j)$ experienced by both users $i$ and $k$ when sending packet $j$ can be expressed as: 
\begin{align}
\label{eq1}
\mathcal{D}&(j) = d_{ij,kl}(j)  \\
&= (1-p_i)(1-\widehat{p}_{n,i}(j))\overline{p}_{f,i} + (1-p_k)(1-\widehat{p}_{n,k}(j))\overline{p}_{f,k}.\nonumber
\end{align}
Similarly, when sending packet $l$ for users $i$ and $k$ the decoding delay $\mathcal{D}(l)$ can be obtained by replacing $j$ by $l$ in \eref{eq1}.

If the encoded packet is $j \oplus l$, user $i$ will experience one unit of decoding delay if the following conditions are true:

\begin{enumerate}
\item He receivers the packet $j \oplus l$.
\item The packets $j $ and $ l$ are either in this Has set or in his Wants set. The probability of this event is 
\begin{align*}
\widehat{p}_{n,i}(j)\widehat{p}_{n,i}(l)+ (1-\widehat{p}_{n,i}(j)(1-\widehat{p}_{n,i}(l)).
\end{align*}
\item He still needs packets.
\end{enumerate}

User $k$ will experience a similar delay when sending the encoded packet $j \oplus l$. Therefore, the overall decoding delay $\mathcal{D}(j \oplus l)$ for users $i$ and $k$ is given by:
\begin{align}
\label{joplusl}
& \hspace{1cm} \mathcal{D}(j \oplus l) = d_{ij,kl}(j \oplus l) = \\
&  (1-p_i)(\widehat{p}_{n,i}(j)\widehat{p}_{n,i}(l)+ (1-\widehat{p}_{n,i}(j)(1-\widehat{p}_{n,i}(l)))\overline{p}_{f,i}+  \nonumber \\
& (1-p_k)(\widehat{p}_{n,k}(j)\widehat{p}_{n,k}(l)+ (1-\widehat{p}_{n,k}(j)(1-\widehat{p}_{n,k}(l)))\overline{p}_{f,k}.\nonumber
\end{align}

Note that, since $j \oplus 0 = j$, then $d_{ij,kl}(j)$ can be obtained by replacing $l$ by $0$ in \eref{joplusl} and taking $\widehat{p}_{n,i}(0) =0$ (i.e. $0$ is not an innovative packet for every user).

\subsection{LG-IDNC Graph Construction}

To further minimize the decoding delay for G-IDNC in a lossy feedback scenario, we look for all possible packet combinations then select the one that guarantees a minimum decoding delay increase for the current transmission. To represent all these potential packet combinations, we use a graph model similar to the G-IDNC graph introduced in \cite{ref2}. In our lossy feedback context, we will call this graph the LG-IDNC.

The LG-IDNC graph is constructed like the G-IDNC graph by first generating a vertex for all nonzero entries in the SFM. We then connect these vertices if the expected decoding delay of the coded packet is lower than the decoding delay experienced when sending the packets individually. In other word, two vertices $v_{ij}$ and $v_{kl}$ are connected by an edge if one of the following conditions is true:
\begin{itemize}
\item C1: $j=l \Rightarrow$ Packet $j$ is needed by both the users $i$ and $k$.
\item C2: $d_{ij,kl}(j \oplus l) \leq min (d_{ij,kl}(j) ,d_{ij,kl}(l) ) \Rightarrow$ The packet combination $j \oplus l$ guarantees a lower decoding delay than individuals packets.
\end{itemize}

Giving these connectivity conditions, we can clearly see that C2 includes the second condition of the G-IDNC graph by taking $p_{n,i}(l)=p_{n,k}(j)=0$. This proves that the G-IDNC graph is a subgraph of the LG-IDNC graph and so it is guaranteed to achieve lower delay. The set of all feasible encoding packets is defined by all maximal cliques in the LG-IDNC graph. 

Note that, given the innovative and the finish probabilities for every couple user/packet, our newly graph formulation is ensured to outperform the G-IDNC graph for all feedback scenario. The expressions of the innovative and the finish probabilities for multiple feedback imperfections scenarios can be found in \cite{refjournal}. In the special case when the feedback is perfect, the G-IDNC graph and the LG-IDNC graph are the same graph.

To solve this NP-hard maximum weight clique problem, we use the heuristic introduced in \cite{refjournal} to perform effective packets selection. The set of users to whom the encoded packet is intended are those represented by the maximal clique and thus they are the ones that will send feedback upon successful reception of the packet combination regardless if they decoded it or not.

\section{Simulation Results} \label{sec:results}

\begin{figure}[t]
\centering
  \includegraphics[width=1\linewidth]{./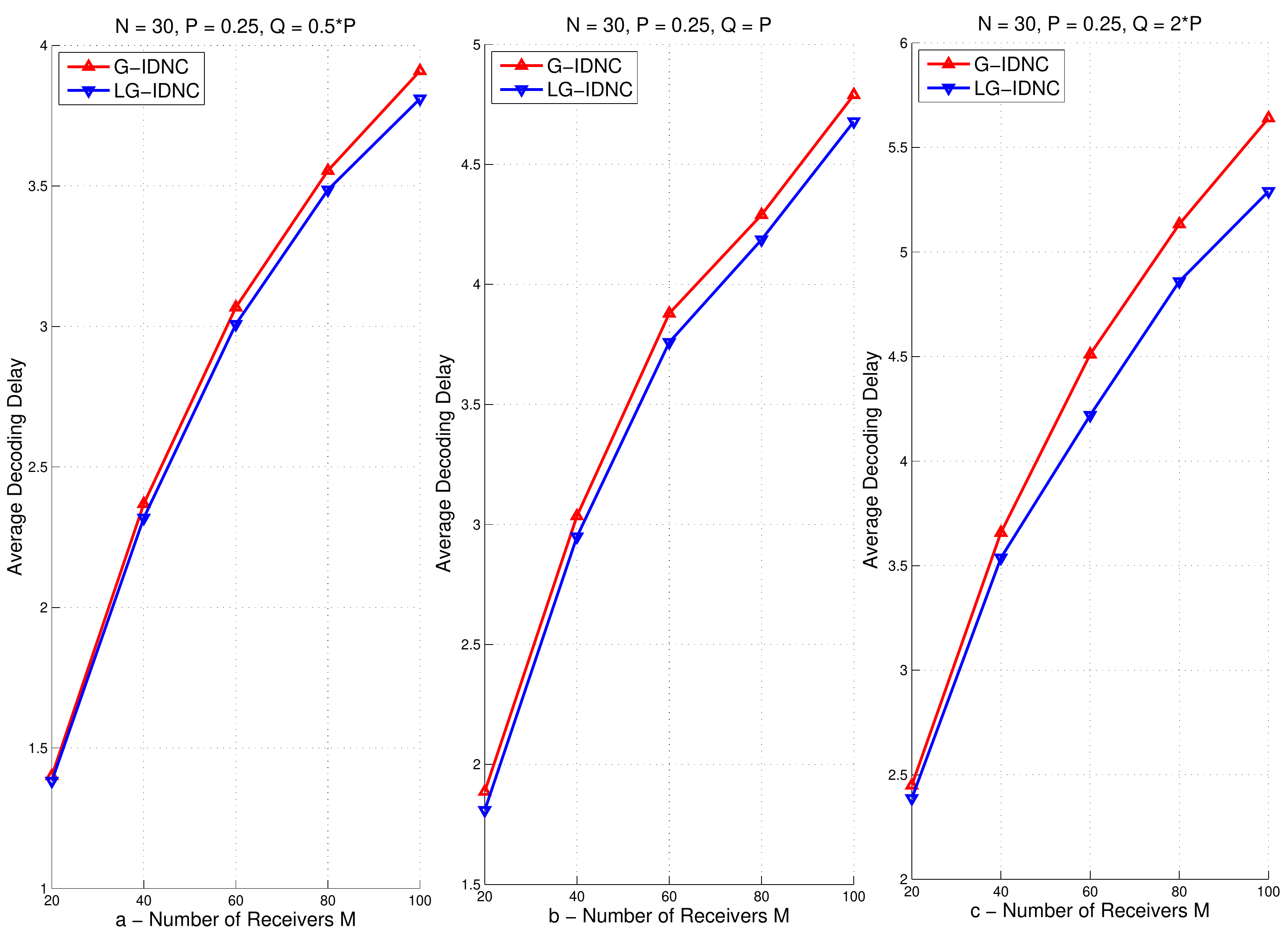}\\
  \caption{Mean decoding delays against $M$ for a low erasure channel.}\label{fig:LOSSYM}
\end{figure}

In this section, we compare the performance achieved by our lossy generalized IDNC graph (denoted by LG-IDNC) against the one achieved by G-IDNC graph (denoted by G-IDNC) to reduce the decoding delay in G-IDNC over lossy feedback. The packet erasure probability ($p_i, ~\forall~ i \in \mathcal{M}$) and the feedback erasure probability ($q_i, ~\forall~ i \in \mathcal{M}$) for all users are considered constant during a delivery frame and they change uniformly in a given range from iteration to iteration while keeping its mean $P$ and $Q$, respectively, constant for all the simulations. The decoding delay is computed over a large number of iterations and the average value is presented. We have $0<P,Q<0.8$.

\begin{figure}[t]
\centering
  \includegraphics[width=1\linewidth]{./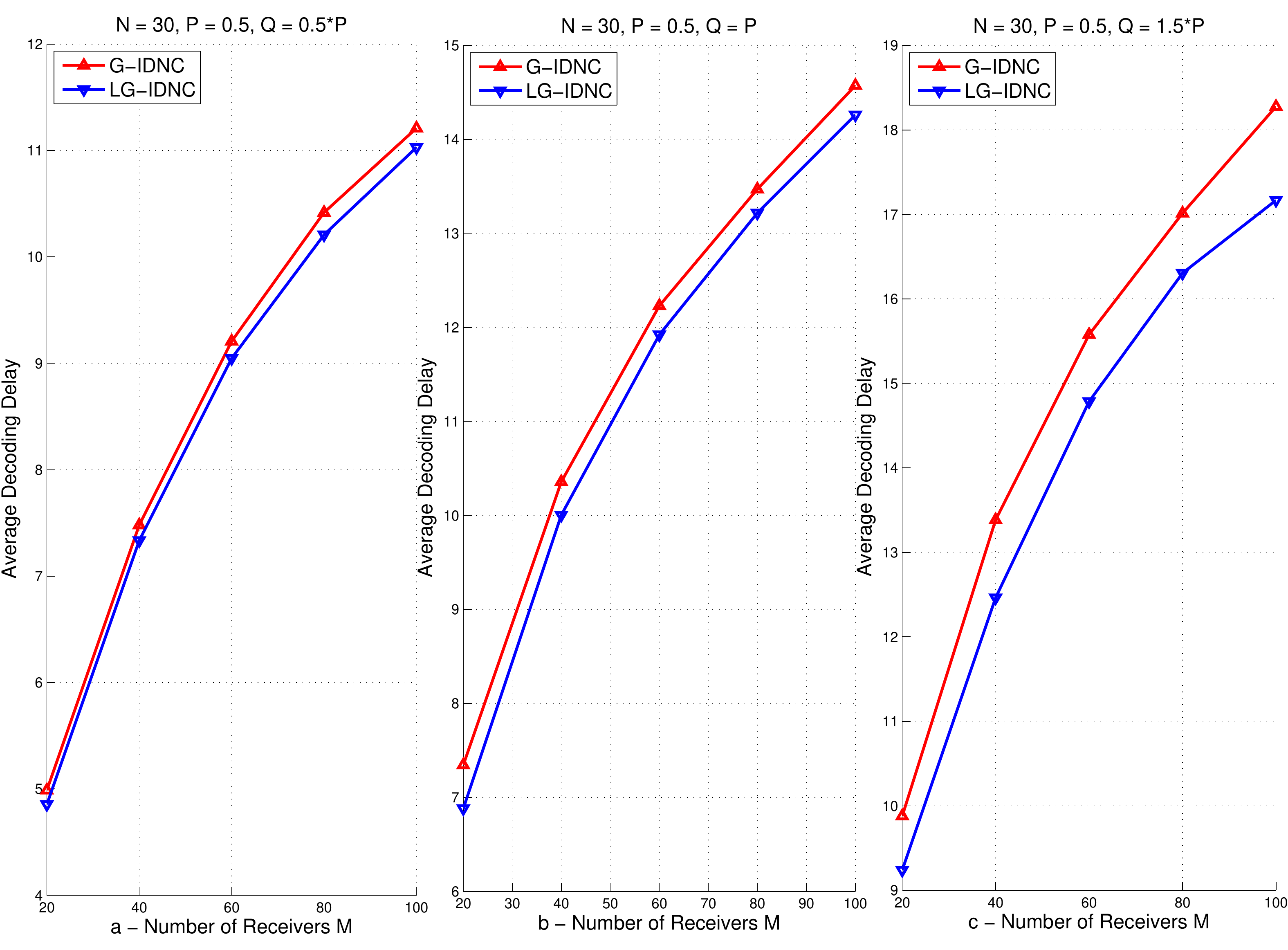}\\
  \caption{Mean decoding delays against $M$ for a high erasure channel.}\label{fig:LOSSYM2}
\end{figure}
\begin{figure}[t]
\centering
  \includegraphics[width=1\linewidth]{./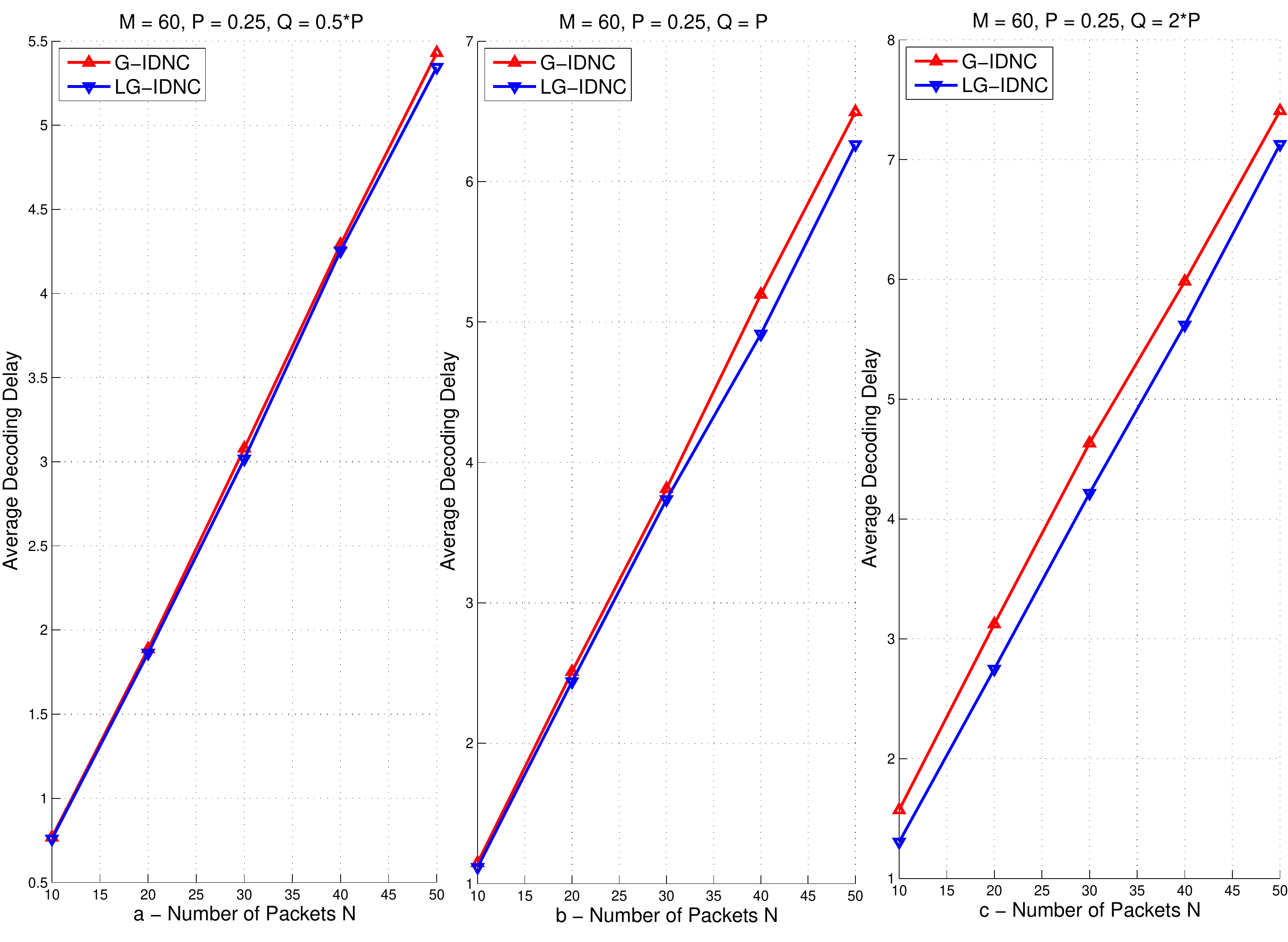}\\
  \caption{Mean decoding delays against $N$ for a low erasure channel.}\label{fig:LOSSYN}
\end{figure}
\begin{figure}[t]
\centering
  \includegraphics[width=1\linewidth]{./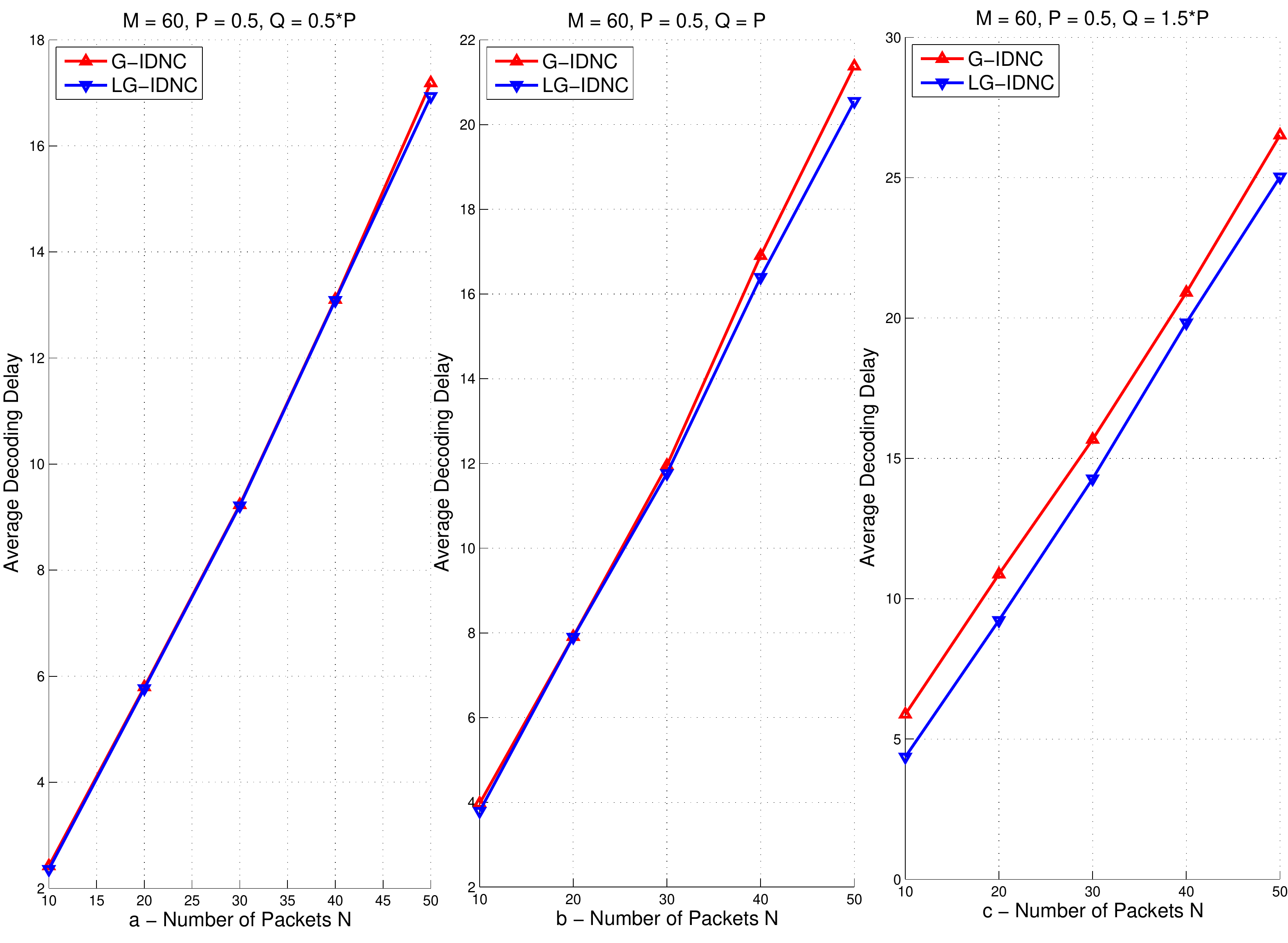}\\
  \caption{Mean decoding delays for intermittent lossy feedback versus $N$.}\label{fig:LOSSYN2}
\end{figure}
\begin{figure}[t]
\centering
  \includegraphics[width=1\linewidth]{./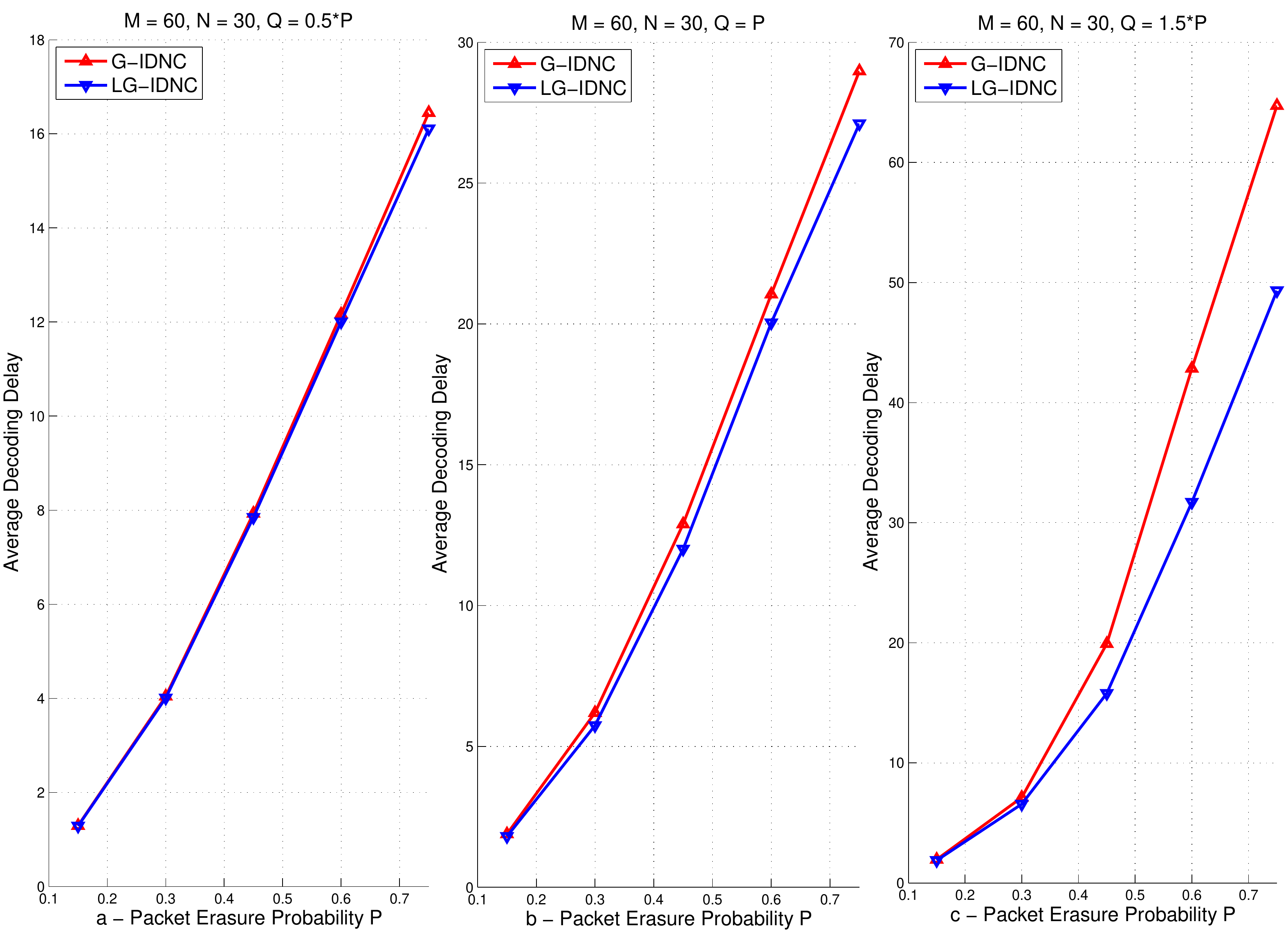}\\
  \caption{Mean decoding delays for lossy feedback versus $P$.}\label{fig:LOSSYP}
\end{figure}

\fref{fig:LOSSYM} depicts the comparison of mean decoding delays achieved by LG-IDNC and G-IDNC graphs against $M$, for $N = 30$, $P=0.25$ and, respectively, for $Q=\cfrac{P}{2}$, $Q=P$ and $Q=2P$. \fref{fig:LOSSYM2} depict the same comparison for a higher packet erasure probability $P=0.5$ and $Q=\cfrac{3P}{2}$. \fref{fig:LOSSYN} and \fref{fig:LOSSYN2} illustrate the comparison of the performance of the graphs against $N$, for $M = 60$, $Q=\cfrac{P}{2}$, $Q=P$ and $Q=2P$ for a low ($P=0.25$) and a high ($P=0.5$, $Q=\cfrac{3P}{2}$) erasure channel respectively. \fref{fig:LOSSYP} shows the same comparison against the packet erasure probability for $M=60$, $N=30$ and and, respectively, for $Q=\cfrac{P}{2}$, $Q=P$ and $Q=\cfrac{3P}{2}$.

From all the figures, we can see that our proposed graph formulation achieves a better decoding delay in all the situation. The average decoding delay gain from \fref{fig:LOSSYM} and \fref{fig:LOSSYN} when the channel is reciprocal is $3\%$ and when the feedback erasure probability is higher than the packet erasure probability is $7\%$. \fref{fig:LOSSYM2}.c and \fref{fig:LOSSYN2}.c shows that our LG-IDNC graph offers a gain of $9\%$ when the channel conditions are harsher for high feedback erasure probability.

\fref{fig:LOSSYM}.a, \fref{fig:LOSSYN}.a, \fref{fig:LOSSYM2}.a and \fref{fig:LOSSYN2}.a show that LG-IDNC and G-IDNC achieve a close decoding delay for a low feedback erasure probability. This can be explained by the fact that in low feedback erasure probability scenario, the probability to loose the transmission and the probability to loose the feedback are close which make packet state estimation non effective for our graph formulation. However, when the channel erasure is lower or equal to the feedback erasure probability, the estimation is more accurate which explain the difference between the achieved decoding delay by LG-IDNC and G-IDNC.

From \fref{fig:LOSSYP}, we clearly can see the gap between our proposed graph formulation and G-IDNC when the persistence of the channel is higher than $0.3$. This can be explained by the fact that when the erasure of the channel increases, the probability for an uncertain packet increase also. Therefore, more vertices are likely to be connected in the graph and as a consequence the encoded packet will targeted more users achieving a lower decoding delay.

\section{Conclusion} \label{sec:conclusion}

In this paper, we first introduced the LG-IDNC graph to further minimize the broadcast decoding delay of generalized instantly decodable network coding in the lossy feedback scenario compared to the G-IDNC graph. While G-IDNC graph limits the packet selection to only certainty instantly decodable combinable packets to all targeted users, the LG-IDNC graph represents all these combinations under the condition that the expected decoding delay of the encoded packet is lower than the individual expected decoding delay of each packet encoded on it. Through extensive simulations, we showed that the decoding delay performance of the LG-IDNC graph are better than the one achieved by the G-IDNC graph and more significantly when the channel is not reciprocal. 

\appendices

\bibliographystyle{IEEEtran}
\bibliography{references}

\begin{thebibliography}{10}
\providecommand{\url}[1]{#1}
\csname url@samestyle\endcsname
\providecommand{\newblock}{\relax}
\providecommand{\bibinfo}[2]{#2}
\providecommand{\BIBentrySTDinterwordspacing}{\spaceskip=0pt\relax}
\providecommand{\BIBentryALTinterwordstretchfactor}{4}
\providecommand{\BIBentryALTinterwordspacing}{\spaceskip=\fontdimen2\font plus
\BIBentryALTinterwordstretchfactor\fontdimen3\font minus
  \fontdimen4\font\relax}
\providecommand{\BIBforeignlanguage}[2]{{%
\expandafter\ifx\csname l@#1\endcsname\relax
\typeout{** WARNING: IEEEtran.bst: No hyphenation pattern has been}%
\typeout{** loaded for the language `#1'. Using the pattern for}%
\typeout{** the default language instead.}%
\else
\language=\csname l@#1\endcsname
\fi
#2}}
\providecommand{\BIBdecl}{\relax}
\BIBdecl

\bibitem{refsameh}
S.~{Sorour}, A.~{Douik}, S.~{Valaee}, T.~Y. {Al-Naffouri}, and M.-S. {Alouini},
  ``{Partially Blind Instantly Decodable Network Codes for Lossy Feedback
  Environment},'' \emph{ArXiv e-prints}, Jul. 2013.

\bibitem{refjournal}
A.~Douik, S.~Sorour, M.-S. Alouini, and T.~Y. Al-Naffouri, ``{Delay
  Minimization for Instant Decodable Network Coding in Persistent Channels with
  Feedback Intermittence},'' \emph{ArXiv e-prints}, Jul. 2013.

\bibitem{850663}
R.~Ahlswede, N.~Cai, S.-Y. Li, and R.~Yeung, ``Network information flow,''
  \emph{IEEE Transactions on Information Theory}, vol.~46, no.~4, pp.
  1204--1216, 2000.

\bibitem{ref4}
S.~Sorour and S.~Valaee, ``On minimizing broadcast completion delay for
  instantly decodable network coding,'' in \emph{Proc. of IEEE International
  Conference on Communications, (ICC 2010),Cape Town, South Africa}, May, 2010,
  pp. 1--5.

\bibitem{ref5}
P.~Sadeghi, D.~Traskov, and R.~Koetter, ``Adaptive network coding for broadcast
  channels,'' in \emph{Workshop on Network Coding, Theory, and Applications,
  (NetCod 2009),Lausanne, Switzerland}, June 2009.

\bibitem{ref6}
P.~Sadeghi, R.~Shams, and D.~Traskov, ``An optimal adaptive network coding
  scheme for minimizing decoding delay in broadcast erasure channels,''
  \emph{EURASIP Journal on Wireless Communications and Networking}, vol. 2010,
  no.~1, p. 618016, 2010.

\bibitem{ref1}
L.~Keller, E.~Drinea, and C.~Fragouli, ``Online broadcasting with network
  coding,'' in \emph{IEEE 4th Workshop on Network Coding, Theory and
  Applications, (NetCod 2008),Hong Kong, China.}, Jan. 2008, pp. 1--6.

\bibitem{ref2}
S.~Sorour and S.~Valaee, ``Minimum broadcast decoding delay for generalized
  instantly decodable network coding,'' in \emph{Proc. of IEEE Global
  Telecommunications Conference, (GLOBECOM 2010), Miami, Florida USA}, Dec.
  2010, pp. 1--5.

\bibitem{ref3}
S.~Sorour, N.~Aboutorab, P.~Sadeghi, M.~S. Karim, T.~Al-Naffouri, and M.-S.
  Alouini, ``Delay reduction in persistent erasure channels for generalized
  instantly decodable network coding,'' \emph{Proc. of IEEE Vehicular
  Technology Conference, (VTC 2013),Dresden, Germany}, pp. 1--5, June. 2013.

\bibitem{refahmed}
A.~Douik, S.~Sorour, M.-S. Alouini, and T.~Y. Al-Naffouri, ``Delay reduction in
  lossy intermittent feedback for generalized instantly decodable network
  coding,'' in \emph{Proc. of IEEE 9th International Conference on Wireless and
  Mobile Computing, Networking and Communications (WiMob 2013)}, Lyon, France,
  Oct. 2013.

\bibitem{ref17}
M.~Esmaeilzadeh and P.~Sadeghi, ``Optimizing completion delay in network coded
  systems over tdd erasure channels with memory,'' in \emph{Proc. of IEEE
  International Symposium on Communications and Information Technologies,
  (ISCIT 2012),Queensland, Australia}, Oct. 2012, pp. 883--888.

\bibitem{arg1}
A.~{Le}, A.~S. {Tehrani}, A.~G. {Dimakis}, and A.~{Markopoulou}, ``{Instantly
  Decodable Network Codes for Real-Time Applications},'' \emph{ArXiv e-prints},
  Mar. 2013.

\end{thebibliography}

\end{document}